\newtheorem{theorem}{Theorem}
\begin{document}

\title{Pure Price of Anarchy for Generalized Second Price Auction\thanks{This work was done when the first author and the second author were visiting Microsoft Research Asia.}}

\author{Wenkui Ding\thanks{\texttt{dingwenkui@gmail.com}. Tsinghua University, Beijing, P.R.China.} \and Tao Wu\thanks{\texttt{wutao27@mail.ustc.edu.cn}. University of Science and Technology of China, Hefei, P.R.China.} \and Tao Qin\thanks{\texttt{taoqin@microsoft.com}. Microsoft Research Asia, Beijing, P.R.China.} \and Tie-Yan Liu\thanks{\texttt{tyliu@microsoft.com}. Microsoft Research Asia, Beijing, P.R.China.}}


\date{}
\maketitle

\begin{abstract}
The Generalized Second Price auction (GSP) has been widely used by search engines to sell ad slots. Previous studies have shown that the pure Price Of Anarchy (POA) of GSP is 1.25 when there are two ad slots and 1.259 when three ad slots. For the cases with more than three ad slots, however, only some untight upper bounds of the pure POA were obtained. In this work, we improve previous results in two aspects: (1) We prove that the pure POA for GSP is 1.259 when there are four ad slots, and (2) We show that the pure POA for GSP with more than four ad slots is also 1.259 given the bidders are ranked according to a particular permutation.
\end{abstract}

\section{Introduction}
The Generalized Second Price auction (GSP), a generalization of the Vickrey auction, is the primary method used by search engines to sell online ad slots. With GSP, advertisers compete for several ad slots with decreasing click-through rates (CTR) by submitting bids to the search engine. Ad slots are assigned to advertisers according to the descending order of their bids (ad-specific CTR is also considered in practice) and each advertiser pays an amount of money if his/her ad is clicked according to the second-price rule.

GSP has drawn a lot of attention from the academic community. For example, the pure Price of Anarchy (POA) of GSP has been extensively studied. Leme and Tardos \cite{leme2010pure} exhibited the weakly feasible condition and obtained an upper bound of 1.618 for the pure POA of GSP. They also proved that the pure POA equals 1.25 when there are exactly two ad slots. Caragiannis et al. \cite{caragiannis2011efficiency} improved the upper bound (for any number of ad slots) to 1.282 based on the same condition. Lucier and Leme \cite{lucier2010improved} extended Leme and Tardos’ results and proved that the pure POA is also 1.259 when there are three ad slots. However, for the case with more than three ad slots, no tighter bounds for the pure POA of GSP have been derived, as far as we know. The motivation of this work is just to look into this issue, and advance the state of the art.

\section{Preliminaries}
Consider GSP auctions with $n$ advertisers and $n$ ad slots. Denote $\boldsymbol{v}=(v_1,v_2,\cdots v_n)$ as the advertisers' value vector, where $v_i$ is the value of the $i$-th advertiser. Denote $\boldsymbol{\alpha}=(\alpha_1,\alpha_2,\cdots\alpha_n)$ as the click-through rate (CTR) of ad slots, where $\alpha_j$ corresponds to the CTR of the $j$-th slot. Without loss of generality, we assume $1=v_1\geq v_2\geq\cdots \geq v_n\geq0$ and $1=\alpha_1\geq \alpha_2\geq \cdots \geq \alpha_n\geq0$. Given bid vector $\boldsymbol{b}=(b_1,b_2,\cdots b_n)$, we can obtain a permutation $\pi=(\pi_1,\pi_2,\cdots,\pi_n)$ of advertisers according to the descending order of their bids, where $\pi(k)$ is the index of the advertiser with the $k$-th highest bid. Following the common practice \cite{caragiannis2011efficiency,lucier2010improved,leme2010pure}, all advertisers are assumed to be conservative, i.e., $\forall i, b_i \leq v_i$.

Given the above notations, the pure POA of GSP with $n$ ad slots can be defined as follows,
\[
POA_n=\max_{\boldsymbol{v},\boldsymbol{\alpha}}\max_{\boldsymbol{b}\in \mathcal{N}(\boldsymbol{v},\boldsymbol{\alpha})} \frac{\sum_{i=1}^n \alpha_i v_i}{\sum_{i=1}^n \alpha_i v_{\pi(i)}},
\]
where $\mathcal{N}(\boldsymbol{v},\boldsymbol{\alpha})$ is the set of the pure Nash equilibrium bid vectors given $\boldsymbol{v}$ and $\boldsymbol{\alpha}$. We further define the pure POA of a given permutation $\pi$ with $n$ ad slots as:
$$POA_n(\pi)=\max_{\boldsymbol{v},\boldsymbol{\alpha}}\max_{\boldsymbol{b}\in \mathcal{N}(\boldsymbol{v},\boldsymbol{\alpha}, \pi)} \frac{\sum_{i=1}^n \alpha_i v_i}{\sum_{i=1}^n \alpha_i v_{\pi(i)}},
$$
where $\mathcal{N}(\boldsymbol{v},\boldsymbol{\alpha}, \pi)$ is the set of the pure Nash equilibrium bid vectors admitting the permutation $\pi$ given $\boldsymbol{v}$ and $\boldsymbol{\alpha}$. It is not difficult to get $POA_n=\max_{\pi\in \Omega_n}POA_n(\pi)$, where $\Omega_n$ is the set of permutations of length $n$ admitting pure Nash Equilibria.

\section{Pure POA of GSP when $n=4$}
The following theorem shows that when $n=4$, the pure POA of GSP equals 1.259.
\begin{theorem}
$POA_4=1.259$.
\end{theorem}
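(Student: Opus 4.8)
The plan is to establish the two inequalities $POA_4 \ge 1.259$ and $POA_4 \le 1.259$ separately. The lower bound is essentially free: given the three-slot instance that certifies $POA_3 = 1.259$ (Lucier and Leme), I would extend it to four slots by adjoining a fourth advertiser with value $v_4 = 0$ and a fourth slot with CTR $\alpha_4 = 0$. Since conservativeness forces $b_4 = 0$ and the extra slot yields zero utility to everyone, the original equilibrium extends verbatim to a conservative pure Nash equilibrium of the four-slot game, the price faced by the advertiser in slot $3$ is still $0$, and neither the numerator nor the denominator of the POA ratio changes. Hence $POA_4 \ge POA_3 = 1.259$, and the entire content of the theorem is the matching upper bound.

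For the upper bound I would use the decomposition $POA_4 = \max_{\pi \in \Omega_4} POA_4(\pi)$ from the preliminaries and analyze each admissible permutation $\pi$ in turn. Fixing $\pi$, write $w_j = v_{\pi(j)}$ for the value sitting in slot $j$ and $\beta_j = b_{\pi(j)}$ for its bid, so that $\beta_1 \ge \cdots \ge \beta_4 \ge 0$, the price in slot $j$ is $\beta_{j+1}$ (with $\beta_5 := 0$), and conservativeness reads $\beta_j \le w_j$. The equilibrium is then characterized by the deviation inequalities
\[
\alpha_j(w_j - \beta_{j+1}) \ge \alpha_k(w_j - \beta_k) \quad (k < j), \qquad \alpha_j(w_j - \beta_{j+1}) \ge \alpha_k(w_j - \beta_{k+1}) \quad (k > j),
\]
together with $1 = v_1 \ge v_2 \ge v_3 \ge v_4 \ge 0$ and $1 = \alpha_1 \ge \cdots \ge \alpha_4 \ge 0$. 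For each $\pi$, maximizing $\big(\sum_i \alpha_i v_i\big)/\big(\sum_j \alpha_j w_j\big)$ subject to these constraints is a finite-dimensional optimization in $v_2, v_3, v_4$, $\alpha_2, \alpha_3, \alpha_4$ and $\beta_1, \dots, \beta_4$.

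Before grinding through the optimizations I would prune the $24$ permutations drastically. Many are \emph{infeasible}: the deviation inequalities, combined with the ordering of the bids and conservativeness, are contradictory unless the equilibrium assignment is only mildly out of order, so $\Omega_4$ is much smaller than $S_4$. I expect a structural lemma of the shape ``in any conservative equilibrium an advertiser can be displaced by only a bounded number of lower-value advertisers'' to rule out the badly scrambled permutations, with symmetry and dominance collapsing the rest. For each surviving $\pi$ I would guess the active (tight) deviation constraints at the optimum, eliminate the bids $\beta_j$ using those equalities, and reduce to a smooth maximization in the remaining value and CTR variables handled by elementary calculus or KKT conditions; the extremal instance should turn out to embed the three-slot worst case and hence cap the ratio at exactly $1.259$.

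The main obstacle is the upper bound, and within it two intertwined difficulties. First, the objective is a ratio of bilinear forms and the feasible region is cut out by bilinear (hence non-convex) Nash inequalities, so one cannot simply invoke LP duality; the reduction to a tractable problem hinges on correctly identifying which deviation constraints bind at the maximizer, a choice that varies with $\pi$. Second, one must be certain the pruning of permutations is exhaustive --- that no discarded $\pi$ secretly admits an equilibrium with ratio exceeding $1.259$ --- which is precisely where a clean feasibility/structure lemma for length-four equilibria carries the argument. Once those two points are secured, comparing the per-permutation maxima and taking the largest yields $POA_4 \le 1.259$, matching the lower bound.
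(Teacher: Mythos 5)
Your lower bound is correct and is essentially what the paper does: the paper's witness instance $\alpha=(1,0.55,0.47,0.47)$, $v=(1,0.53,0.15,0)$, $b=(0,0.53,0.15,0)$ is exactly the three-slot worst case padded with a zero-value, zero-bid advertiser (your variant with $\alpha_4=0$ works just as well and is even easier to verify). The problem is the upper bound, which, as you say yourself, is the entire content of the theorem --- and there your proposal is a plan rather than a proof, with holes at precisely the two spots you flag. The structural lemma that prunes $S_4$ down to $\Omega_4$ is only conjectured, never stated or proved; and the per-permutation maximization is never set up in a form you can actually solve: ``the extremal instance should turn out to embed the three-slot worst case'' is a hope, not an argument. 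What you are left with, for each surviving permutation, is a non-convex program in the values, CTRs and bids whose binding constraints you admit you cannot identify a priori, and nothing in the proposal closes that gap.

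The idea you are missing --- the one that makes the paper's enumeration terminate without any pruning lemma or KKT analysis --- is a monotonicity reduction, not an optimization. For any permutation $\pi$, in the ratio $f(v,\alpha,\pi)=\frac{\sum_i\alpha_i v_i}{\sum_i \alpha_i v_{\pi(i)}}$ the variable $v_4$ enters the numerator with coefficient $\alpha_4$ and the denominator with coefficient $\alpha_{\pi^{-1}(4)}\geq\alpha_4$, and $f\geq 1$ on the sorted domain by the rearrangement inequality, so $\partial f/\partial v_4\leq 0$ throughout. Hence the worst instance for every permutation can be taken to have $v_4=0$; conservativeness then forces $b_4=0$, the fourth advertiser and fourth slot drop out, the remaining bids form an equilibrium of the induced three-advertiser, three-slot game, and $POA_4(\pi)\leq POA_3=1.259$ follows uniformly --- all $24$ permutations are handled at once, infeasible ones vacuously, with no case analysis of tight constraints. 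One caveat if you rebuild the proof along these lines: after replacing $v_4$ and $b_4$ by $0$ you must re-verify equilibrium, because lowering $b_4$ makes the slot just above advertiser $4$ a cheaper deviation target for the advertisers above it; the paper itself passes over this point lightly (``the proofs for other permutations are just similar''), and that verification --- not the optimization program you propose --- is the real residual work.
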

\begin{proof}
First, we prove $POA_4(\pi)\leq1.259$ by enumerating all possible $\pi$'s. Here we only give the details for one permutation $\pi=(2,3,1,4)$. The proofs for other permutations are just similar.

For the pure POA of the permutation $\pi=(2,3,1,4)$, we have
$
POA_4(\pi)\leq\max_{\boldsymbol{v},\boldsymbol{\alpha}}f(\boldsymbol{v},\boldsymbol{\alpha}, \pi)$, where $f(\boldsymbol{v},\boldsymbol{\alpha},\pi)=\frac{\alpha_1v_1+\alpha_2v_2+\alpha_3v_3+\alpha_4v_4}{\alpha_1v_2+\alpha_2v_3+\alpha_3v_1+\alpha_4v_4}$.
It can be verified that $\frac{\partial f(\boldsymbol{v},\boldsymbol{\alpha})}{\partial v_4}<0$. So
$f(\boldsymbol{v},\boldsymbol{\alpha})\leq f(\boldsymbol{v},\boldsymbol{\alpha})|_{v_4=0}$ and $
POA_4(\pi)\leq \max_{\boldsymbol{v},\boldsymbol{\alpha}}\frac{\alpha_1v_1+\alpha_2v_2+\alpha_3v_3}{\alpha_1v_2+\alpha_3v_1+\alpha_3v_3}.
$
Since $v_4=0$ now, we have $b_4=0$, and then $(b_1,b_2,b_3)$ is a Nash equilibrium of the game restricted to advertiser $1,2$ and $3$. The right-hand-side of the above equation should be no larger than $POA_3$ which is 1.259 \cite{caragiannis2011efficiency,lucier2010improved}.

Second, we can use the following example to show $POA_4(\pi)\geq1.259$, and therefore prove $POA_4(\pi)=1.259$. Suppose $\alpha_1=1, \alpha_2=0.55, \alpha_3=\alpha_4=0.47, v_1=1, v_2=0.53, v_3=0.15, v_4=0, b_1=0, b_2=0.53, b_3=0.15, b_4=0$. It is easy to verify that the bid vector together with the value and CTR vector compose a Nash equilibrium and its efficiency is 1.259.
\end{proof}

\emph{Remark}: Please note that the proof technique we use here is different from those used by previous works \cite{leme2010pure,caragiannis2011efficiency,lucier2010improved}. The previous works basically leverage the weakly feasible condition \cite{leme2010pure} to upper bound the pure POA: instead of directly finding the worst efficiency of all Nash equilibrium permutations, they use the worst efficiency of all permutations satisfying the weakly feasible condition as a bound. The following example shows that the upper bound obtained by using the weakly feasible condition as done in these previous works cannot be tight for $n=4$.\\
\textbf{Example}
Assume there are 4 advertisers in a GSP auction. Consider value vector $(1.00,0.53,0.25,0.16)$ and  slot CTR vector $(1.00,0.57,0.47,0.19)$. Permutation $\pi=(2,3,1,4)$ satisfies the weakly feasible condition, and its efficiency 1.269. Therefore, the upper bound obtained from the weakly feasible condition is at least 1.269. Note that we have proved $POA_4=1.259$, thus this upper bound cannot be tight.

\section{Pure POA for $n>4$}
Although we can prove $POA_4=1.259$ by enumerating all possible Nash equilibrium permutations, this approach is difficult to generalize to the cases with more ad slots. This is because there are simply too many permutations to enumerate.

\cite{lucier2010improved} provided a possible approach to prove the pure POA for any $n\geq 3$ based on a conjecture and a lemma: The
conjecture states that the permutation $\pi_n=(2,3,...,n,1)$ is the worst one in terms of pure POA among the permutations of length $n$: $POA_n(\pi_n)\geq POA_n(\sigma_n), \forall \sigma_n\in \Omega_n$ and the lemma states that $POA_n(\pi_n)=1.259$ for $POA_n(\pi_n)\leq1.259$ and $n\geq 3$.  However, we find that their proof of the result is non-rigorous (if not mistaken): In particular, the key inequality used in their proofs, $\frac{x+av}{y+bv}\leq\frac{x+av'}{y+bv'}$ given $a\leq b$ and $v\geq v'$, is incorrect; to guarantee the correctness of the inequality, a further condition $\frac{x}{y}\geq \frac{a}{b}$ is required. In this paper, we give a rigorous proof to this lemma, which is based on very different techniques from those used in \cite{lucier2010improved}.

\begin{theorem}\label{theo:per}
Given a permutation $\pi_n=(2,3,...,n,1)$ where $n\geq3$, we have $POA_n(\pi_n)\leq1.259$, and the bound is tight.
\end{theorem}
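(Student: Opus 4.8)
The plan is to prove the upper bound $POA_n(\pi_n)\le 1.259$ and the matching lower bound separately, normalizing $v_1=\alpha_1=1$ throughout. For $\pi_n=(2,3,\dots,n,1)$ advertiser $i$ occupies slot $i-1$ for $2\le i\le n$ and advertiser $1$ occupies the last slot, so the optimal (numerator) allocation is the sorted one $\mathrm{OPT}=\sum_{i}\alpha_i v_i$ while the equilibrium (denominator) allocation is $\mathrm{ALG}=\alpha_n v_1+\sum_{j=2}^n\alpha_{j-1}v_j$. The engine of the proof is a single equilibrium constraint: advertiser $1$ (value $1$, last slot, price $0$, utility $\alpha_n$) must not want to move up to slot $k$, i.e.\ $\alpha_n\ge\alpha_k(1-b_{k+1})$; combined with conservativeness $b_{k+1}\le v_{k+1}$ this yields the lower bounds $v_j\ge 1-\alpha_n/\alpha_{j-1}$ for every $j=2,\dots,n$. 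Discarding all remaining equilibrium constraints only enlarges the feasible region, so it suffices to bound $f=\mathrm{OPT}/\mathrm{ALG}$ over this relaxation.

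First I would show that on the relaxed region $f$ is decreasing in each $v_j$ for $j\ge 2$. Isolating the $v_j$-terms writes $f=\frac{x+\alpha_j v_j}{y+\alpha_{j-1}v_j}$ with $x,y>0$ (indeed $x\ge\alpha_1 v_1=1$ and $y\ge\alpha_n v_1=\alpha_n$), which is exactly the situation flagged in the remark: the inequality $\frac{x+av}{y+bv}\le\frac{x+av'}{y+bv'}$ for $a\le b$, $v\ge v'$ holds precisely under the side condition $x/y\ge a/b$. That condition is automatic here, because $f$ is the mediant of $x/y$ and $\alpha_j/\alpha_{j-1}$ and exceeds the latter (as $f>1\ge\alpha_j/\alpha_{j-1}$), forcing $x/y\ge f>1\ge\alpha_j/\alpha_{j-1}$. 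Hence the worst case sets every $v_j$ to its lower bound $v_j=1-\alpha_n/\alpha_{j-1}$; these are consistent with $v_2\ge\cdots\ge v_n\ge 0$ since $1-\alpha_n/\alpha_{j-1}$ is nonincreasing in $j$. The problem then collapses to a pure CTR optimization $\max_{\boldsymbol\alpha}N/D$ over $1=\alpha_1\ge\cdots\ge\alpha_n\ge 0$.

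A short computation gives the clean forms $N-D=\alpha_n\sum_{j=2}^n\bigl(1-\alpha_j/\alpha_{j-1}\bigr)$ and $D=\sum_{j=1}^{n-1}\alpha_j-(n-2)\alpha_n$, so bounding $f=1+(N-D)/D$ by $1.259$ is a self-contained inequality in the $\alpha_j$. I would prove it by induction on $n$, the base case $n=3$ being the two-variable calculation that yields the constant $1.259$ (consistent with the known extremal instance $\alpha_2\approx 0.55$, $\alpha_3\approx 0.47$). For the inductive step, fix $\alpha_1,\dots,\alpha_{n-1}$ and optimize over the smallest CTR $\alpha_n\in[0,\alpha_{n-1}]$; since $f$ is a ratio of a quadratic to a linear function of $\alpha_n$, its stationarity condition is an explicit quadratic. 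Either the maximizer is the endpoint $\alpha_n=\alpha_{n-1}$, in which case the instance coincides with an $(n-1)$-slot instance and the induction hypothesis applies, or it is interior, in which case I substitute the stationary value back and bound the resulting expression. I expect the interior case to be the main obstacle: one must show that an interior optimum of the tail CTR can never push $f$ above $1.259$ — equivalently, that at the global optimum the tail CTRs collapse to $\alpha_3=\cdots=\alpha_n$, reducing everything to the three-slot problem.

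Finally, tightness is obtained by the same collapse run in reverse. Starting from the three-slot extremal instance with CTRs $(1,a^\ast,c^\ast)$, values $(1,\,1-c^\ast,\,1-c^\ast/a^\ast)$ and its equilibrium bids, I extend it to $n$ slots by appending equal tail CTRs $\alpha_3=\cdots=\alpha_n=c^\ast$ together with zero-value advertisers $v_4=\cdots=v_n=0$ bidding $0$, placed in slots $3,\dots,n-1$. Because the appended CTRs are all equal and the displaced bidders bid $0$, advertiser $1$ is indifferent among slots $3,\dots,n$, the zero-value advertisers never gain by deviating, and the remaining incentive constraints are exactly those of the three-slot game; thus $\pi_n$ is a conservative Nash equilibrium. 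The appended terms contribute $0$ to both $\mathrm{OPT}$ and $\mathrm{ALG}$, so the efficiency equals the three-slot value $1.259$, establishing tightness for every $n\ge 3$.
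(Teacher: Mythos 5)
Your first two paragraphs and your tightness construction are sound, and the reduction is actually cleaner than the paper's own route to the same point. The paper reaches the identical reduced expression (its equation (1)) through a three-way case analysis on which lower bound of each $v_j$ binds ($v_j=v_{j+1}$ versus the deviation bound), with separate contradiction arguments and an appeal to induction when two consecutive CTRs coincide; you bypass all of that by noting that the deviation bounds $1-\alpha_n/\alpha_{j-1}$ are automatically nonincreasing in $j$, so the all-lower-bounds point is feasible, and your mediant argument supplies exactly the side condition $x/y\ge a/b$ whose absence the paper criticizes in Lucier--Leme. The tightness example (append CTRs equal to $\alpha_3$ and zero-value, zero-bid advertisers) is the same as the paper's.

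The genuine gap is the step you yourself flag as the ``main obstacle'': proving
\[
\max_{1=\alpha_1\ge\cdots\ge\alpha_n\ge 0}\;
\frac{\sum_{i=1}^{n}\alpha_i-\alpha_n\sum_{i=1}^{n-1}\frac{\alpha_{i+1}}{\alpha_i}}
{\sum_{i=1}^{n}\alpha_i-(n-1)\alpha_n}\;\le\;1.259 .
\]
Your induction plan handles only half of this: the endpoint case $\alpha_n=\alpha_{n-1}$ does collapse exactly to the $(n-1)$-slot instance (this part checks out), but the interior-stationary-point case is left at ``substitute the stationary value back and bound the resulting expression,'' which is precisely the assertion to be proved, not a proof. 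Nothing in your argument rules out that an interior choice of $\alpha_n$ together with strictly decreasing tail CTRs pushes the ratio above $1.259$; the claim that the optimum collapses to $\alpha_3=\cdots=\alpha_n$ is equivalent to the theorem itself. So as written, the proposal establishes the reduction and the lower bound, but not the upper bound.

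For comparison, the paper closes this step with two bounds in terms of the average ratio $\lambda=\frac{1}{n-1}\sum_{i=1}^{n-1}\alpha_{i+1}/\alpha_i$, namely $f\le\frac{n-1}{n-2}\lambda$ and $f\le n-(n-1)\lambda$, whose minimum is at most $\frac{n}{n-1}\le 1.25$ for $n\ge 5$, yielding the desired contradiction. But you should not plan to simply import this: the first of these bounds evaluates to roughly $1.135$ on CTR vectors approaching the tight example, e.g.\ $(1,\,0.55,\,0.47,\,0.47-\epsilon,\,0.47-2\epsilon)$ with small $\epsilon>0$, where the reduced ratio itself is roughly $1.259$, so the paper's inequality chain cannot be correct as stated. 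In other words, your instinct that the collapse of the tail CTRs is the real crux is well founded --- this is exactly where the paper's own argument is shakiest --- but your proposal does not yet supply the missing inequality either.
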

\begin{proof}
We prove the result for $n\geq5$ by induction and contradiction, since the result has been obtained for $n=3$ and $n=4$ (see Theorem 1). Assume that the result holds for $3\leq n \leq k-1$, and by contradiction $POA_k(\pi_k)>1.259$.

Let $f(\alpha,v,\pi_k)=\frac{\alpha_1v_1+\alpha_2v_2+\cdots+\alpha_kv_k}{\alpha_1v_2+\alpha_2v_3+\cdots+\alpha_kv_1}$. Since $f(\alpha,v,\pi_k)>1$, it can be verified that $\forall 2\leq i\leq k, \frac{\partial f(\alpha,v,\pi_k)}{\partial v_i}<0$. The constraints that $v_2$ must satisfy are $v_2\geq v_3$ and $v_2\geq(1-\frac{\alpha_k}{\alpha_1})v_1$. To reach the maximal $POA_k(\pi_k)$, either $v_2=v_3$ or $v_2=(1-\frac{\alpha_k}{\alpha_1})v_1$. There are three possible cases for the relationship between $v_2$ and $v_3$.

$\boldsymbol{Case\ 1}$: $v_2=v_3>(1-\frac{\alpha_k}{\alpha_1})v_1$

Because $(1-\frac{\alpha_k}{\alpha_2})v_1\leq(1-\frac{\alpha_k}{\alpha_1})v_1<v_3$, we have $v_3=v_4$. Similarly we get $v_2=v_3=\cdots=v_k=(1-\frac{\alpha_k}{\alpha_{k-1}})v_1\leq(1-\frac{\alpha_k}{\alpha_1})v_1$, which leads to a contradiction.

$\boldsymbol{Case\ 2}$: $v_2=v_3=(1-\frac{\alpha_k}{\alpha_1})v_1$

If $\alpha_1\neq\alpha_2$, we have $v_3=v_4=\cdots=v_k=(1-\frac{\alpha_k}{\alpha_{k-1}})v_1<(1-\frac{\alpha_k}{\alpha_1})v_1$, which leads to a contradiction.

If $\alpha_1=\alpha_2$, we can eliminate slot $\alpha_2$ and bidder $v_2$ simultaneously and obtain $POA_k(\pi_k)\leq POA_{k-1}(\pi_{k-1})$, which leads to a contradiction.

$\boldsymbol{Case\ 3}$: $v_2>v_3$

Now we need to consider $v_3$ and $v_4$. Similarly, there are three possible cases for the relationship between $v_3$ and $v_4$; one can discuss the three cases seperately, and for the third case one needs to dig into the relationship between $v_4$ and $v_5$ recursively.
Doing so we have
\begin{equation}
f(\alpha,v,\pi_k)=\frac{\sum_{i=1}^{k}\alpha_i-\alpha_k\sum_{i=1}^{k-1}\frac{\alpha_{i+1}}{\alpha_i}}
{\sum_{i=1}^{k}\alpha_i-(k-1)\alpha_k}.
\end{equation}

By standard techniques, we get (1) $
f(\alpha,v,\pi_k)\leq\frac{\sum_{i=1}^{k-1}\frac{\alpha_{i+1}}{\alpha_i}+\frac{\alpha_k}{\alpha_{k-1}}-1}{k-2} \leq \frac{k-1}{k-2}\lambda$
and (2)
$ f(\alpha,v,\pi_k)\leq \frac{k-(k-1)\lambda}{k-(k-1)}=k-(k-1)\lambda$, where $\lambda=\frac{1}{k-1}\sum_{i=1}^{k-1}\frac{\alpha_{i+1}}{\alpha_i}$ and $0<\lambda<1$.
Thus,
$
POA_k(\pi_k)\leq \max_{\alpha,v}f(\alpha,v,\pi_k)\leq\min\{\frac{k-1}{k-2}\lambda,k-(k-1)\lambda\}\leq1.25, \forall n\geq5
$,
which leads to another contradiction.

Combing the above three cases, we proved that $POA_n(\pi_n)\leq 1.259$. Further, we can construct an example to show the tightness of the bound by extending the example used in the proof of Theorem 1: $\forall 4\leq k \leq n, \alpha_k=0.47, v_k=0, b_k=0$.
\end{proof}

\bibliography{POA}
\bibliographystyle{plain}

\end{document}